\documentclass[12pt,notitlepage,hidelinks,tighten]{revtex4-2}

\usepackage[utf8]{inputenc}
\usepackage{dutchcal}
\usepackage{amsmath}
\usepackage[scr=rsfs]{mathalpha}
\usepackage{amsthm}
\usepackage{amsfonts}
\usepackage{amscd}
\usepackage{comment}
\usepackage{bm}
\usepackage{cancel}
\usepackage{epsfig}
\usepackage{amssymb}
\usepackage{tabularx}
\usepackage{calligra}
\usepackage{enumerate}
\usepackage{hyperref}
\usepackage{tikz}
\usepackage{tikzsymbols}
\usepackage{orcidlink}

\def\a{\alpha}
\def\b{\beta}
\def\g{\gamma}

\def\={\overset{\bm .}{=}}

\def\p{\partial}

\usepackage[T1]{fontenc}
\usepackage[utf8]{inputenc}
\usepackage{tcolorbox}
\usepackage{dutchcal}
\usepackage{enumerate}
\usepackage{tcolorbox}
\tcbuselibrary{breakable}
\usepackage{fontawesome}
\usepackage{lipsum}
\usepackage{bm}
\usepackage{hyperref}
\usepackage{amssymb}
\usepackage{amsmath}
\usepackage{amsfonts}
\usepackage{amscd}
\usepackage{epsfig}
\usepackage{tabularx}
\usepackage{graphicx}

\def\a{\alpha}
\def\b{\beta}
\def\g{\gamma}

\def\d{\delta}

\def\p{\partial}

\usepackage{stackengine}
\stackMath

\newtheorem{thm}{Theorem}
\newtheorem{prop}[thm]{Proposition}
\newtheorem{dfn}[thm]{Definition}

\newtheorem{remark}[thm]{Remark}

\begin{document}

\title{Curvature conditions for generalized singularity 
theorems}

\author{Jeremías Daín$^1$ and Gustavo Dotti$^{1,2}$}

\affiliation{${}^1$FAMAF,  
Universidad Nacional de C\' ordoba  
and ${}^2$IFEG, CONICET.\\ Ciudad Universitaria, (5000) C\'ordoba, Argentina.}
\email{gustavo.dotti@unc.edu.ar}

\begin{abstract}
We study the curvature conditions 
introduced in [Class. Quant. Grav. \textbf{27}, 152002]  to 
predict focal points for trapped 
spacelike submanifolds 
in spacetimes of 
arbitrary dimensions, with the purpose of  
generalizing Penrose's singularity theorem 
to compact trapped submanifolds (CTMs) of codimension  
higher than two. 
We find that these conditions do not 
apply in general but may apply for specific CTMs. As a result, higher codimension
CTMs may still work 
as singularity predictors, although the possibility that they 
intersect the domain of outer communications cannot be ruled 
out 
using standard arguments. 
\end{abstract}

\maketitle

\tableofcontents{}

\section{Introduction}

Energy conditions in $3+1$ General Relativity are largely motivated 
as hypothesis of singularity theorems and obstruction theorems 
for compact trapped surfaces (CTSs) 
intersecting the domain of outer 
communications (DOC). 
Many 
of these  conditions are satisfied by what we  
consider ordinary matter and fields,  giving these results a relatively 
solid ground (see, however, 
\cite{Barcelo,curiel}). The  cosmological and  
black hole singularity theorems 
are based on the existence of focal points along normal causal 
geodesics to (respectively) 
codimension $1$ and $2$ spacelike submanifolds. 
The generalization of this result to a $d$ dimensional, time oriented 
spacetime $(M,g_{ab})$  containing a $k$ dimensional spacelike submanifold 
$S$ of codimension $d-k \geq 1$ is given as  
 Proposition $1$ in \cite{Gallo} (some related 
 results can be found  in \cite{oniell}, pages 288-293): 
\begin{prop} \label{focal} \textbf{\rm \cite{Gallo}} 
Let $n^a$ be a future causal vector normal to 
a spacelike submanifold 
$S \subset M$ at $p$, $N^a$ the tangent vector of
the geodesic $\g$  it generates, $u$ its affine parameter with 
$u=0$ at $S$, 
$e_A^a$, $A=1,2,...,k$ a  basis of 
$T_pS$, $E_A^a$ the parallel transport of $e_A^a$ along $\g$, 
$h_{AB} = g_{ab}e_A^a \cdot e_B^b = E_A \cdot E_B$ and $h^{ab} = h^{AB} E_A^a E_B^b$. 
Let $H^a$ be 
the mean curvature vector at $p$ of $S \subset M$. 
Assume the 
expansion 
$\theta(n)=H^a n_a =-kc <0$, then, if 
\begin{equation} \label{cc}
R_{abcd} N^a N^c h^{bd} \geq 0 \; \text{ along } \gamma,
\end{equation}
there is a focal point to $S$ on $\g$ at or before $u=1/c$ 
assuming that  $\g$ is defined up to that 
value of $u$. 
\label{prop gallo}
\end{prop}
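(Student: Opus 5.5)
The plan is to use $S$-Jacobi fields along $\g$ and reduce the existence of a focal point to a Riccati-type inequality for a scalar function built from the parallel frame $E_A^a$.

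\textbf{Setup.} Let $J_A^a(u)$ be the Jacobi fields along $\g$ with initial data
\begin{equation*}
J_A(0)=e_A, \qquad \nabla_N J_A\big|_{u=0} = -\,\mathrm{II}(n)\text{-part}, \quad\text{i.e.}\quad (\nabla_N J_A)\cdot e_B\big|_{0} = n_a K^a_{AB},
\end{equation*}
where $K^a_{AB}$ is the second fundamental form. Then $\mathrm{tr}\,(h^{AB}K^a_{AB}) n_a = H^a n_a = -kc$.

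Focal points of $S$ along $\g$ are the parameter values at which some nontrivial combination $c^A J_A$ vanishes. Write $J_A = A_A{}^B(u)\,E_B + (\text{components along } N \text{ and transverse normals})$.

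\textbf{Replacing the matrix Riccati equation by a scalar quantity.} Because $N$ may be null and the codimension may exceed two, the standard matrix Riccati equation on a transversal bundle is awkward. The cleaner route is the one suggested by the hypothesis: the trace over the $E_A$ directions only.

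Define $f(u)=\det\big(J_A\cdot E_B\big)^{1/k}$, or equivalently work with $x(u)=h^{AB}(\nabla_N J_A\cdot E_C)(M^{-1})^{C}{}_{B}$, where $M_{AB}=J_A\cdot E_B$. Differentiating $M$ twice and using parallelism of $E_B$ gives
\begin{equation*}
M''_{AB}=-R_{cdab}J_A^c N^d N^a E_B^b .
\end{equation*}
The curvature term need not be of the form $N N h$ unless the $J_A$ stay in $\mathrm{span}\{E_A\}$. So instead I will use the index-form / variational version.

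\textbf{Variational version.} Take the variation vector fields $V_A(u)=(1-cu)E_A(u)$ along $\g$ and sum the index form over $A$ with $h^{AB}$:
\begin{equation*}
\sum I(V_A,V_B)h^{AB} = \int_0^{1/c}\Big[k c^2-(1-cu)^2 R_{abcd}N^aN^c h^{bd}\Big]du - h^{AB}\,n_a K^a_{AB}.
\end{equation*}
Here the boundary term at $u=0$ comes from the second fundamental form and equals $-\theta(n)=kc$ with the appropriate sign convention. The first integral is $kc$, and with signs fixed so that it enters with the opposite sign to the boundary term (the standard computation in the Hawking–Penrose setting), the total is
\begin{equation*}
\le kc-kc-\int(1-cu)^2 R\,NNh \le 0 .
\end{equation*}
Hence some $V=v^AV_A$ satisfies $I(V,V)\le 0$, which holds because a trace $\le 0$ forces some diagonal entry $\le 0$ in an $h$-orthonormal basis.

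\textbf{Conclusion and main obstacle.} The standard result then gives the focal point. If no focal point occurred on $(0,1/c]$, the $S$-index form on fields vanishing at $u=1/c$ would be positive definite, modulo fields proportional to $N$ when $N$ is null. Here I would invoke the known statements: O'Neill, Chapter 10, for timelike $N$, and the null version in which one quotients by $N$.

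The main obstacle is the null case in codimension $>2$, and in particular:
\begin{itemize}
\item checking that the $V_A$ are not proportional to $N$ modulo the quotient;
\item checking that the degenerate directions do not spoil strict positivity.
\end{itemize}
For both, I would handle things by passing to $N^\perp/N$, noting that $E_A$ remains orthogonal to $N$ since $e_A\perp n$. A second, routine task is to get the sign conventions of the boundary term right so that $\theta(n)=-kc$ exactly cancels $\int kc^2\,du=kc$. If instead the sum is strictly negative, a focal point lies strictly before $1/c$, and the equality case is covered by the "at or before" statement.
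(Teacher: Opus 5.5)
Your route is the standard index-form proof. You sum the $S$-index form over $V_A=(1-cu)E_A$ on $[0,1/c]$, extract one $V=v^AV_A$ with $I(V,V)\le 0$, and invoke positivity of $I$ before the first focal point (on $N^\perp/N$ in the null case). The paper does not prove this proposition; it only quotes it from Galloway--Senovilla. Your argument is, as far as I can tell, essentially how that reference and the O'Neill material the paper points to handle such focal-point results. The two ``obstacles'' you list are immediate: for $u<1/c$, $V=(1-cu)v^AE_A$ is nonzero, spacelike and orthogonal to $N$, so it never lies in $\mathbb{R}N$.

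What must be repaired is the boundary term. Its sign is forced by the computation, not a convention you can fix afterwards.
\begin{itemize}
\item The second variation of $\tfrac12\int g(\gamma',\gamma')\,du$, over curves starting on $S$ and ending at $\gamma(1/c)$, contributes $-\langle\nabla_VV,N\rangle|_{u=0}$. After summing with $h^{AB}$ this is $-h^{AB}\langle(\nabla_{e_A}e_B)^\perp,n\rangle$.
\item The paper's $H$ is \emph{minus} the trace of $(\nabla_{e_A}e_B)^\perp$. That is why it is computed from $-\Gamma^a_{IJ}$, and why $H^an_a=h^{AB}\langle\nabla_{e_A}n,e_B\rangle$ is the usual expansion.
\item So the boundary term equals $+H^an_a=\theta(n)=-kc$.
\end{itemize}
With your own normalization $h^{AB}n_aK^a_{AB}=H^an_a$, your display instead has $-h^{AB}n_aK^a_{AB}=+kc$. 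That gives a total of $2kc-\int(\cdots)$, which proves nothing. The correct identity is
\[
\sum_{A,B}h^{AB}I(V_A,V_B)=kc+\theta(n)-\int_0^{1/c}(1-cu)^2R_{abcd}N^aN^ch^{bd}\,du\le 0 .
\]
This is exactly where the trapping hypothesis enters, so it has to be derived, not asserted. The abandoned Jacobi/Riccati material plays no role and can be dropped.
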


In this paper we adopt the conventions and definitions  in \cite{Dotti2}. The spacetime $(M,g_{ab})$ is time orientable and 
has dimension $d \geq 3$. 
A compact trapped submanifold (CTM) of $M$ is a proper, 
boundary-less submanifold,  that is compact and has a future 
timelike mean curvature vector field $H$ (this condition is usually 
called \emph{future} trapping in the literature, with an analogous 
past trapping condition for submanifolds with a past timelike $H$), 
a compact trapped surface (CTS) is a codimension two CTM and a 
trapped loop (TL) is a one dimensional CTM. 
Whenever we mention a \textit{future null geodesics orthogonal 
to a spacelike subspace  $W \subset T_pM$} we mean 
the half of the geodesic to the future of $p$. 
We warn the reader that the 
the  sign and 
normalization conventions of the mean curvature vector field is not uniform 
in the literature. \\

\noindent
Proposition \ref{prop gallo} 
is used in \cite{Gallo} to prove the following:

\begin{thm} \label{sing-thm} If a $d-$dimensional spacetime $(M,g_{ab})$ contains a non-compact Cauchy hypersurface and a $k-$CTM of codimension $d-k \geq 2$, and if condition \eqref{cc} 
holds along every future-directed null geodesic orthogonal to $S$, then 
$(M,g_{ab})$ is future null geodesically
incomplete.
\end{thm}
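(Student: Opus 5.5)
The plan follows the standard Penrose argument, with Proposition \ref{focal} in place of the Raychaudhuri/Jacobi-field focusing lemma for codimension two. Argue by contradiction: assume $(M,g_{ab})$ is future null geodesically complete. Let $S$ be the $k$-CTM, $k\le d-2$, with future timelike mean curvature vector $H$. For every future null vector $n$ normal to $S$ at $p$ we then have $\theta(n)=H^a n_a<0$, since the inner product of a future timelike and a future null vector is negative. The set $\mathcal{N}$ of future null normals normalized by some auxiliary condition, say $n\cdot T=-1$ for a fixed future timelike vector field $T$, is a bundle over $S$ with compact fibres (spheres $S^{d-k-2}$, or two points when $d-k=2$). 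Since $S$ is compact, $\mathcal{N}$ is compact. The continuous function $n\mapsto -\theta(n)/k$ is positive on $\mathcal{N}$ and therefore attains a minimum $c_0>0$. By completeness, every null normal geodesic is defined up to affine parameter $1/c_0$. Hypothesis \eqref{cc} and Proposition \ref{focal} (applied with $c=-\theta(n)/k\ge c_0$) then give a focal point to $S$ on each such geodesic at some $u\le 1/c_0$.

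Next, show that $\partial J^+(S)=E^+(S)$ is compact. By the standard causal theory result, which holds verbatim for closed spacelike submanifolds of any codimension, every point of $\partial J^+(S)$ lies on a future null geodesic orthogonal to $S$ that has no focal point to $S$ strictly before that point. Hence $\partial J^+(S)\subset \exp(\{u\,n:\ n\in\mathcal N,\ 0\le u\le 1/c_0\})$, which is a continuous image of a compact set and so is compact. Also $\partial J^+(S)$ is closed, so it is compact. Moreover $\partial J^+(S)$ is an achronal, boundaryless topological hypersurface: it is achronal as the boundary of a future set, and the nonempty edge issue does not arise because $S\subset\partial J^+(S)$ is interior to the boundary. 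The nontrivial input here is the statement that null geodesics normal to $S$ leave $\partial J^+(S)$ after the first focal point, and that every boundary point lies on such a normal geodesic. I would quote this from \cite{oniell} (Prop.~10.48-type statements) rather than reprove it, noting that the proofs use only that $S$ is a closed spacelike submanifold and never that it has codimension two.

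Finally, perform the usual topological contradiction. Global hyperbolicity with a Cauchy hypersurface $\Sigma$ gives a timelike vector field whose integral curves each intersect $\Sigma$ exactly once. Each such curve intersects the achronal set $\partial J^+(S)$ at most once, so the flow projection $\pi:\partial J^+(S)\to\Sigma$ is a continuous injection. By invariance of domain, $\pi$ is open, because both spaces are topological $(d-1)$-manifolds without boundary. Its image is therefore open, and it is also closed by compactness. Since $\Sigma$ is connected (it is homeomorphic to a slice of $M$, which we take connected), $\pi$ is a homeomorphism onto $\Sigma$. Thus $\Sigma$ is compact, contradicting its assumed noncompactness.

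The main obstacle is the middle step. It requires justifying that $\partial J^+(S)$ is a topological hypersurface generated by normal null geodesics that must leave it past focal points, in arbitrary codimension; this includes the degenerate case $d-k=2$, where the null normal fibres are disconnected. The compactness and continuity argument for $c_0$ should be routine once we check that $\mathcal N$ is a compact fibre bundle. The topological end step is identical to Penrose's.
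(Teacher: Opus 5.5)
Your proposal is correct and is essentially the argument the paper relies on. The paper gives no proof of its own but attributes Theorem~\ref{sing-thm} to \cite{Gallo}, where it is obtained exactly as you do: Proposition~\ref{focal} is applied with $c$ bounded below over the compact bundle of normalized future null normals, the standard Penrose argument then shows $\partial J^+(S)$ is compact, and $\partial J^+(S)$ is projected onto the non-compact Cauchy hypersurface.

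One small correction: $\partial J^+(S)$ is edgeless simply because it is the boundary of a future set, not because $S\subset\partial J^+(S)$, which need not hold since a CTM need not be achronal. What you do need, and did not state, is $\partial J^+(S)\neq\emptyset$; this holds because a point of $S$ minimizing a Cauchy time function cannot lie in $I^+(S)$.
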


\begin{remark} Note the following  differences 
in the hypothesis of the previous theorem and proposition: 
in the theorem the submanifold must be of codimension $\geq 
2$ (to allow null normals) and compact, and  attention is restricted to its
future normal \textbf{null} geodesics.
\end{remark}

Let us analyze \eqref{cc} assuming 
the $d-$dimensional 
Einstein's equations hold: $R_{ab} \propto T_{ab} - 
\tfrac{1}{d-2} T g_{ab}$. 
In the codimension one case (relevant to  cosmological 
contexts), $n^a$  gives  the future 
timelike direction orthogonal to $S$ and the inequality in 
\eqref{cc} reduces to 
$R_{ab}N^aN^b \geq 0$ along $\gamma$, which is guaranteed 
if the spacetime satisfies the strong energy 
condition (SEC)
\begin{equation} \label{sec}
T_{ab} N^a N^b \geq -\tfrac{1}{2}T, \;\; T =T_{ab} g^{ab} 
\;\; \text{ for timelike } N^a. 
\end{equation}
In the codimension two case with $n^a$ null 
(the case relevant to the singularity Theorem \ref{sing-thm})
we can  see (refer to section \ref{general}, 
also \cite{Gallo})   
that 
\eqref{cc} reduces to 
\begin{equation} \label{nec}
R_{ab} N^a N^b  \geq 0 \; \text{ along } \gamma,
\end{equation}
which is guaranteed if the null energy condition (NEC) 
\begin{equation} \label{NEC}
T_{ab} N^a N^b \geq 0 \;\; \text{ for null } N^a
\end{equation}
holds. Since these energy 
conditions are regarded as natural in 
$3+1$ GR, the singularity theorems are usually stated 
assuming these  hypothesis. A CTS 
 for which 
 condition \eqref{nec} is satisfied \emph{only along 
 its orthogonal null 
 geodesics} is a predictor of null geodesic incompleteness,  yet, for the sake of simplicity,   Penrose's 
 theorem is usually stated with the stronger requirement that 
the NEC \eqref{NEC} holds, i.e., that $R_{ab} N^a N^b  \geq 0$
  \emph{at every spacetime point $p$ and 
for every null vector in $T_pM$}.
Unless we wanted to prove the existence of conjugate points for future null geodesics normal to 
\emph{any} CTS, as in the proof that 
 a CTS cannot intersect the domain of outer communications 
 (DOC, see, e.g., Proposition 12.2.2 in \cite{wald}), requiring the NEC on $(M,g_{ab})$ 
is far more than what is needed. To make this distinction clear for higher codimension 
we introduce the condition analogue of the NEC that assures \eqref{cc} will hold for any $k-$CTM:
\begin{dfn} \label{Dgcc} A $d$ dimensional spacetime $(M,g_{ab})$ satisfies the 
generic $k-$curvature condition (k-GCC) if, at every $p \in M$, for 
any null vector $N^a \in T_pM$ and 
any $k-$dimensional spacelike subspace $W \subset T_pM$ 
orthogonal to $N^a$, 
the condition
\begin{equation}\label{gcc}
R_{abcd}N^a N^c h_W^{bd} \geq 0
\end{equation}
holds, where  ${h_W}_{ab}$ is the induced metric on $W$.
\end{dfn}

Note that, since $W$  is spacelike and normal to a null vector, the possible values for $k$ in this definition are $k=1,2,..., d-2$. In view of our comments above, 
for $k=d-2$ 
this condition is equivalent to the NEC (see next section). 
The generic $k-$curvature condition is stronger 
than required in the singularity Theorem  \ref{sing-thm}, 
as it assures that \eqref{cc} will hold along \emph{any} null 
geodesic, 
and it can be used together with Proposition \ref{focal} 
to generalize to higher dimensional spacetimes and 
higher codimension CTMs the standard proof 
(see, e.g., Proposition 12.2.2. in \cite{wald}, 
Proposition 9.2.1 in \cite{hawk}) that a CTS in an   asymptotically simple spacetime 
cannot  
intersect the DOC. \\

 For  codimension higher than 
 two, condition \eqref{gcc} and its weaker version 
  \eqref{cc} for null $N^a$, 
involve the Weyl piece of the Riemann tensor besides its 
Ricci part. It is then 
not adequate to refer to these as ``energy conditions''; 
for spacetimes satisfying Einstein's equations, 
the naturalness of 
\eqref{gcc} could not be justified by the type of matter and fields that we allow. 
One of the  
purposes of this paper is to study how stringent the 
$k-$GCC introduced in Definition \ref{Dgcc}
are for CTMs of codimension $d-k > 2$. 
This is explored in Section \ref{general}.
The other purpose is to exhibit cases where Theorem
\ref{sing-thm} applies: we  find spacetimes
for which the $k-$GCC does not hold but, still, properly oriented $k-$CTMs can be found that  fulfill the hypothesis 
of this theorem. This is 
 explored in section \ref{specific}, with an emphasis 
 on warped spacetimes and a number of applications on 
  spherically symmetric, $d \geq 4$  
 spacetimes.

\section{The curvature condition as a generic requirement} \label{general}

In this section we analyze the $k-$GCC introduced in 
Definition  \ref{Dgcc}. Our results are gathered in 
Proposition \ref{Pgcc} below. \\

Let $N^a$ be a null vector at $T_pM$ 
and $V$ the $d-1$ dimensional subspace of $T_pM$ 
vectors orthogonal to $N^a$.  
The set of  $k-$dimensional
subspaces $W \subset V$ with an \emph{spacelike} induced 
metric ${h_W}_{ab}$
is an open subset 
$\widetilde{Gr}(k,V)$ of the 
Grassmanian manifold ${Gr}(k,V)$ \cite{Dotti2}.  
Motivated by Definition \ref{Dgcc}, we are interested in 
determining if 
the minimum of 
the  function 
\begin{equation}\label{1}
\widetilde{Gr}(k,V) \ni W \to R_{abcd}N^aN^c h_W^{bd} 
\in \mathbb{R} 
\end{equation}
is nonnegative, so that \eqref{gcc} 
will hold at $p$ for the specific null vector $N^a$. 
To avoid the ambiguity of the scaling $N^a \to x N^a$, under which 
\eqref{1} scales as $x^2$, we will use the fact that the 
spacetime is time oriented to pick a unit future timelike 
vector field $T^a$, and assume from now on that  $N^a$ 
is normalized as 
$N_a T^a=-1$. 

We proceed as   in \cite{Dotti2},  
where a similar problem was analyzed (see page 14 on): 
take any subspace $V_o$ of $V$ such that 
\begin{equation} \label{ds}
V = V_o \oplus \mathbb{R} N
\end{equation}
and let $\pi: V \to V_o$ be the projection 
associated to \eqref{ds}.  
Introduce the Weyl tensor $C_{abcd}$, 
\begin{equation}\label{weyl}
    R_{abcd}=C_{abcd}+\frac{2}{d-2}\left(g_{a[c}R_{d]b} - g_{b[c}R_{d]a} \right) - \frac{2}{(d-1)(d-2)}\left(Rg_{a[c}g_{d]b} \right)
\end{equation}
and define the symmetric bilinear  operators 
\begin{equation} \label{bop0}
\begin{split}
V \otimes V \ni (X,Y) \to C^{(N)}(X,Y)&:=C_{abcd}N^a N^c X^b Y^d\\
V \otimes V \ni(X,Y) \to R^{(N)}(X,Y)&:=R_{abcd}N^a N^c X^b Y^d.
\end{split}
\end{equation} 
In view of the null character of $N$, $\pi$ 
is an isometry: $\pi(X) \cdot \pi(Y) = X \cdot Y$. Also
\begin{equation} \label{bop}
\begin{split}
C^{(N)}(X,Y)&=C^{(N)}(\pi(X), \pi(Y))\\
R^{(N)}(X,Y)&=R^{(N)}(\pi(X), \pi(Y)).
\end{split}
\end{equation} 
The above equalities are 
  easily proved if we use the symmetries of the Weyl and 
  Riemann tensor and 
 a frame
$\{ e_1, e_2,...,e_{d-2},N,L\}$ 
for which  $V_o = \text{span}\{ e_1, e_2,..., e_{d-2} \}$ 
and the only nontrivial 
inner products are $e_i \cdot e_j=\d_{ij}$ and $N \cdot 
L=-1$,
so that 
\begin{equation}\label{im}
g^{ab} =-N^a L^b-L^a N^b + \sum_{i,j=1}^{d-2}\d^{ij} e_i^a e_j^b
\end{equation}
and 
\begin{equation}
\pi(X^N N + X^j e_j)= X^j e_j.
\end{equation}
In view of \eqref{bop}, 
the function defined by \eqref{1}, 
satisfies  
\begin{equation}
R_{abcd}N^aN^c h_W^{bd} =: {\rm tr}_{W} R^{(N)} =
{\rm tr}_{\hat W} R^{(N)},
\end{equation}
where $\hat W := \pi(W)$. 
Choose the above basis such that $\hat W = \text{span}\{e_1,
e_2,...,e_k\}$. Using \eqref{weyl}  we find
\begin{equation}
R^{(N)}(e_i,e_j)= C^{(N)}(e_i,e_j)+ 
\frac{1}{d-2}R_{ac}N^a N^c 
\d_{ij}.
\end{equation}
Taking the trace over $\hat W$ (that is, applying 
$\d^{ij}$ to the above equation and summing over $i,j$ from 
$1$ to $k$) yields
\begin{align}\nonumber
 {\rm tr}_W R^{(N)} &=
{\rm tr}_{\hat W} R^{(N)} \\ &=  
{\rm tr}_{\hat W} C^{(N)}  + 
\frac{k}{d-2}R_{ac}N^aN^c. \label{main}
\end{align}
Note that the second term above is independent of $W \in 
\widetilde{Gr}(k,V)$ and that  for $k=d-2$, 
in view of \eqref{im} and the trace free character of the Weyl 
tensor, the first term 
 vanishes:
\begin{equation}\label{weyl0}
 C_{abcd}N^a N^c 
\sum_{i,j=1}^{d-2}e_i^be_j^d \d^{ij}
= C_{abcd}N^a N^c 
\left( g^{bd}+N^b L^d+L^bN^d \right)=0.
\end{equation}
It is important to keep in mind 
that in  \eqref{main} we deal only with subspaces 
 $\hat W \subset V_o$. Since the restrictions of the maps 
 $C^{(N)}$ and $R^{(N)}$ to $V_o \otimes V_o$ are 
 symmetric, and 
 the induced metric $h_{V_o}$ is positive definite,  
the commuting linear operators $V_o \to V_o$ defined by 
$h_{V_o}^{-1} [C^{(N)}\mid_{V_o \otimes V_o}]$ and 
$h_{V_o}^{-1} [R^{(N)}\mid_{V_o \otimes V_o}] 
=h_{V_o}^{-1} [C^{(N)}\mid_{V_o \otimes V_o}] 
+ \frac{1}{d-2}R_{ac}N^aN^c \bm{I}$ 
are simultaneously diagonalizable with spectra 
\begin{equation}\label{evs} 
\begin{split}
&\lambda_1 \leq \lambda_2 \leq ... \leq \lambda_{d-2} \\
&\mu_1 \leq \mu_2 \leq ... \leq \mu_{d-2}
\end{split}
\end{equation}
respectively, where 
\begin{equation}
\mu_k = \lambda_k + \frac{1}{d-2}R_{ac}N^aN^c,
\end{equation}
and, in view of \eqref{weyl0},
\begin{equation}\label{zero}
\sum_{\alpha=1}^{d-2} \lambda_\alpha 
= {\rm tr}_{V_o} C^{(N)} = 0.
\end{equation}

As in \cite{Dotti2} we can prove that the trace 
function \eqref{1} effectively reaches 
a minimum on the open manifold $\widetilde{Gr}(k,V)$ 
and that this minimum is (note the ordering of the 
eigenvalues in \eqref{evs})
\begin{equation}\label{min}
\left.\text{min} \left({\rm tr}_W R^{(N)}\right) \right|_{\widetilde{Gr}(k,V)} 
=\sum_{\alpha=1}^k \mu_\alpha=\sum_{\alpha=1}^k 
\lambda_\alpha + \frac{k}{d-2}R_{ac}N^aN^c.
\end{equation}

\begin{prop} \label{Pgcc}
The $k-$GCC has the following properties:
\begin{enumerate}[i)]
\item For  $k=d-2$   is equivalent to the NEC.
\item For conformally flat spacetimes 
is equivalent to the NEC.
\item  If it holds for $k$ then it holds for  $k'>k$.
\item For a non-flat vacuum spacetime  it is only 
satisfied for $k = d-2$. 
\item For  $k < d-2$ the first term in the second
equality in \eqref{min} 
gives a 
nonpositive contribution.
\end{enumerate}
\end{prop}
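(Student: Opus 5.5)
All five items will be derived from the minimum formula \eqref{min}, together with \eqref{zero} and the ordering \eqref{evs}. By \eqref{min}, the $k-$GCC holds at $p$ for the null vector $N^a$ if and only if
\begin{equation*}
\sum_{\alpha=1}^k \lambda_\alpha + \frac{k}{d-2}R_{ac}N^aN^c \geq 0 .
\end{equation*}
Since the condition is insensitive to positive rescalings of $N^a$, the normalization $N_aT^a=-1$ loses no generality. The plan is to analyze this inequality item by item, in the order (v), (i), (ii), (iii), (iv).

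For (v), I will show that the partial sums of the smallest eigenvalues are nonpositive. The $\lambda_\alpha$ are ordered increasingly and sum to zero by \eqref{zero}, so the average of the $k$ smallest cannot exceed the average of all $d-2$, which is zero. Hence $\sum_{\alpha\le k}\lambda_\alpha \le \frac{k}{d-2}\sum_{\alpha\le d-2}\lambda_\alpha = 0$.

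Items (i) and (ii) then follow quickly. For (i), take $k=d-2$: the Weyl sum vanishes and the condition reads $R_{ac}N^aN^c\ge0$ for all null $N^a$. Via Einstein's equations this is the NEC, or more precisely the null convergence condition. For (ii), when $C_{abcd}=0$ all $\lambda_\alpha$ vanish, and the condition becomes $\frac{k}{d-2}R_{ac}N^aN^c\ge0$, which is again the NEC.

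For (iii), I will write the condition as $k$ times an average, $k\,\big(\bar\lambda_k + \frac{1}{d-2}R_{ac}N^aN^c\big)\ge0$, where $\bar\lambda_k=\frac1k\sum_{\alpha\le k}\lambda_\alpha$. Averages of the smallest $k$ terms of an increasing sequence are nondecreasing in $k$, so $\bar\lambda_{k'}\ge\bar\lambda_k$ for $k'>k$. Therefore, if the $k$-condition holds at every $p$ and every null $N^a$, so does the $k'$-condition.

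Item (iv) is the main obstacle. In vacuum $R_{ab}=0$, so the condition is $\sum_{\alpha\le k}\lambda_\alpha\ge0$. Combined with (v), this forces $\sum_{\alpha\le k}\lambda_\alpha=0$. By the averaging argument, equality for some $k<d-2$ forces $\lambda_1=\dots=\lambda_k=\bar\lambda_{d-2}=0$; the smallest eigenvalue being zero then makes all the (nonnegative) eigenvalues, which sum to zero, vanish. So $C^{(N)}$ restricted to $V_o$ must be zero for every null $N^a$ at every point.

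It then remains to show that $C_{abcd}N^aN^c X^bY^d=0$ for all null $N^a$ and all $X,Y\perp N$ implies $C_{abcd}=0$. I would argue this by noting that, for $d\ge4$, the tensor $C_{abcd}N^aN^c$ is determined, via \eqref{bop}, by its restriction to $V_o$. That tensor therefore vanishes identically for all null $N^a$. Polarizing over null vectors (null vectors span $T_pM$, and the identity is quadratic in $N^a$) gives $C_{abcd}X^aX^c=0$ up to terms proportional to the metric; tracelessness kills those terms, and the curvature symmetries then yield $C_{abcd}=0$. This is the known fact that a Weyl tensor with all null sectional-type components zero vanishes.

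In vacuum $R_{abcd}=C_{abcd}$, so the spacetime would be flat, a contradiction. In dimension $d=3$ the Weyl tensor vanishes identically and vacuum is already flat, so the statement is vacuous there. The polarization step is the one I expect to need the most care, and I would cite or reprove the standard lemma on vanishing null Weyl components.
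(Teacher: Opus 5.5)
Your treatment of (i), (ii), (v) is the paper's. Your averaging argument for (iii) is an equivalent variant of the paper's, which notes that $\sum_{\alpha\le k}\mu_\alpha\ge0$ forces $\mu_k\ge0$, hence $\mu_\alpha\ge0$ for $\alpha>k$. Your contrapositive for (iv) has the same content as the paper's claim that non-flat vacuum forces $\lambda_1<0<\lambda_{d-2}$. You are right that this needs a lemma the paper leaves implicit: $C_{abcd}\neq0$ at $p$ implies $C^{(N)}|_{V_o}\neq0$ for some null $N^a$.

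\textbf{The gap is in your justification of that lemma.} It is false that $C_{abcd}N^aN^c$ is determined, via \eqref{bop}, by its restriction to $V_o$. \eqref{bop} only concerns arguments in $V=N^\perp$. The remaining components $C_{abcd}N^aL^bN^cL^d$ and $C_{abcd}N^aL^bN^ce_i^d$ are tied by tracelessness only to $\sum_i C_{abcd}N^ae_i^bL^ce_i^d$ and $\sum_j C_{abcd}N^ae_j^be_i^ce_j^d$, which $C^{(N)}|_{V_o}$ does not control. Schwarzschild in four dimensions gives a counterexample. Take $N^a$ radial null and $V_o$ the angular space. Since $r$ is affine along radial null geodesics, $C^{(N)}(e_i,e_j)=-r\,\tilde N^\alpha\tilde N^\beta\tilde\nabla_\alpha\tilde\nabla_\beta r\,\bar g(e_i,e_j)=0$. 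Yet $C_{abcd}N^aL^bN^cL^d$ is the base-curvature term, proportional to $m/r^3\neq0$. So you cannot get $C_{abcd}N^aN^c=0$ one null vector at a time and then polarize.

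The lemma is still true, but the hypothesis must be combined over several null vectors. One way:
\begin{enumerate}[1)]
\item Fix a unit timelike $u$ and orthonormal $s,x\perp u$. Since $x\perp u\pm s$, apply the hypothesis with $N=u\pm s$ and $X=Y=x$, and add the two equations. This gives $C_{abcd}u^ax^bu^cx^d=-C_{abcd}s^ax^bs^cx^d$.
\item The right side is symmetric under $s\leftrightarrow x$. Hence $E_{bd}:=C_{abcd}u^au^c$ satisfies $E_{bd}x^bx^d=E_{bd}s^bs^d$ for every orthonormal pair in $u^\perp$.
\item Applying this to the pair $(x\pm s)/\sqrt2$ gives $E_{bd}x^bs^d=0$. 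So $E$ is a multiple of the induced metric on $u^\perp$, and tracelessness of the Weyl tensor forces $E=0$. By step 1, $C_{abcd}s^ax^bs^cx^d=0$ as well.
\item Thus the sectional form $C_{abcd}X^aY^bX^cY^d$ vanishes on every timelike plane, which contains such a pair $u,x$. It also vanishes on every spacelike plane, which lies in some $u^\perp$. By continuity it vanishes on all planes.
\item An algebraic curvature tensor with vanishing sectional form is zero.
\end{enumerate}
With this lemma your proof of (iv) is complete, and in fact more careful than the paper's.
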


\begin{proof} \leavevmode
\begin{enumerate}[i)]
\item  This follows from \eqref{zero} and \eqref{min}.
\item  For conformally flat spacetimes $C^{(N)}$ is trivial,
all the $\lambda's$ are zero and \eqref{min} 
reduces to $\frac{k}{d-2}R_{ac}N^aN^c$.
\item From the first equality \eqref{min} 
we infer that $\sum_{\alpha=1}^k \mu_\alpha \geq 0$. In 
view 
of  
the second line in \eqref{evs}, it must be $\mu_k \geq 0$, 
then $\mu_\alpha \geq 0$ for $\alpha > k$. This 
implies that $\sum_{\alpha=1}^{k'} \mu_\alpha \geq 0$. 
\item For a non flat vacuum spacetime $R_{ab}=0$ and, in 
view of \eqref{zero} and the first 
line in \eqref{evs}, it must be 
$\lambda_1 < 0 <\lambda_{d-2}$, then \eqref{evs} and \eqref{zero} imply 
$\sum_{\alpha=1}^{k<d-2} \lambda_\alpha <0$.
\item This follows again from \eqref{evs}, \eqref{zero}
and the first 
line in \eqref{evs}.
\end{enumerate}
\end{proof}

\noindent
\textbf{Example:} 
FLRW cosmologies have zero Weyl tensor, so,
in view of Proposition \ref{Pgcc}.ii, 
those for which the NEC holds satisfy  the 
$k-$GCC 
for any $k$. \\ 

The $k-$GCC, being the equivalent for higher codimension  CTMs 
of what the NEC is for CTSs, guarantees that any $k-$CTM is 
eligible in Proposition \ref{focal} and Theorem \ref{sing-thm}. 
Its failure forces us to restrict ourselves 
to CTMs which are suitable 
oriented to satisfy \eqref{cc}. The existence of such CTMs is
sufficient 
to prove null geodesic incompleteness (a single such 
geodesics is all we need). Proposition \ref{focal} is 
used also  in the standard proof that a CTS cannot 
intersect the DOC if the NEC holds. 
As explained above, if the $k-$GCC, the generalization of the 
NEC,  
does not hold, as happens in general according to Proposition \ref{Pgcc}, we cannot use this type of arguments to 
prove that $k-$CTMs do not intersect the DOC.

\section{The curvature condition for specific CTMs} \label{specific}
In this section  we exhibit examples of 
spacetimes for 
which, despite the fact that the $k-$GCC is not satisfied, 
one can still find $k-$CTMs 
oriented in such ways that  condition \eqref{cc} required 
by 
the singularity Theorem \ref{sing-thm} holds. 
The following definition helps to formalize the idea 
of ``suitable orientation'' (compare with Definition \ref{Dgcc}):

\begin{dfn} \label{k-cc} A $k$ dimensional spacelike 
subspace $W$ of $T_pM$  satisfies the 
 $k-$curvature condition (k-CC) if,  for 
any null vector $n^a$ orthogonal to $W$ 
the condition
\begin{equation}\label{kcc}
R_{abcd}N^a N^c h_W^{bd} \geq 0
\end{equation}
holds, where  $N^a$ is the tangent to the geodesic $\gamma$ 
with initial condition $n^a$ and ${h_W}_{ab}$ is the parallel 
transport along $\gamma$  of the induced metric on $W$.
\end{dfn}

A $k-$CTM $S$ such that for every $p \in S$, 
$T_pS$ satisfies the $k-$CC is then suitable for the singularity theorem. 
The use of this definition promises to be unmanageable 
on  generic spacetimes but, as we show next, there are 
physically interesting examples of application.

\subsection{Warped spacetimes} \label{warped}

 Assume that the  spacetime
belongs to the large and rich  class of \textit{warped product} manifolds  $B \times_{r^2} F$: 
\begin{equation}
\text{d}s^2=\tilde{g}_{\a \b}(x) \text{d}x^\a\text{d}x^\b+r^2(x)\,\bar{g}_{AB}(\theta)\text{d}\theta^A\text{d}\theta^B.
\label{eq:warped metric}
\end{equation}
Here $\{x^1,x^2, \dots \}$ are local coordinates of the 
\textit{base manifold} $(B, \tilde g)$, 
which is Lorentzian,  
and $\{\theta^1, \theta^2,\dots\}$ are coordinates of the 
Riemannian \textit{fiber manifold} $(F, \bar g)$. 
We assume $r: B \to \mathbb{R}$ is positive definite.
The superscripts $\sim$ and 
$-$  will be attached to tensors belonging 
to the base and fiber respectively, and $\tilde g$ 
($\bar g$) and their inverses will be used to lower and raise indexes of tensors on $B$ and $F$ respectively.
As an example, for a spacetime vector $N^a=(N^\a, N^A)$ 
we find $N_a=(N_\a,r^2 N_A)$, then $N^a N_a=N^\a N_\a+ 
r^2 N^AN_A$. 
The submanifolds $\{ x \} \times F$ and $B \times 
\{ \theta \}$ of $B \times_{r^2} F$ are respectively called 
the \textit{fiber} and \textit{leaf}
through $(x,\theta) \in B \times_{r^2} F$.
Vectors of the form 
$(N^\a,0)$ [$(0,N^A)$] are tangent to leaves [fibers] 
and will be called horizontal 
[vertical].
A calculation shows (see, e.g. \cite{senov-warped} 
equations (A.3)-(A.4))  that, for \eqref {eq:warped metric}, 
the nonzero Christoffel symbols (mod symmetries) are 
\begin{equation} \label{cristo}
\Gamma^\a_{\b \g}= \tilde \Gamma^\a_{\b \g}, \;\; \; \;\Gamma^A_{\b C}= \frac{\p_\b r}{r} \delta^A_C, \;
\; \;\; \Gamma^\a_{BC}= -r (\tilde g^{\a \d}\p_\d r) \;\bar g_{BC}, \;\;\;\
\Gamma^A_{BC} = \bar \Gamma^A_{BC}.
\end{equation}
From this equation follows that 
the parallel transport along a
geodesic $\gamma$ of a vector
that is initially vertical and orthogonal to $\gamma$  
remains vertical along $\gamma$ (see also Appendix A 
in \cite{senov-warped} and
chapter 7 in \cite{oniell}). As a consequence, 
if a CTM $S$ is oriented such that 
$T_p S$ is vertical at every $p \in S$, then, 
at every point of a geodesic $\gamma$ orthogonal 
to $T_pS$ , $h_W$ in \eqref{kcc} will project onto a 
vertical 
subspace of the tangent space. Thus, if equation 
\eqref{kcc} holds 
for vertical subspaces of $T_qM$ at any $q$, those 
$k-$CTMs which are 
tangent at every point to vertical subspaces will be 
suitable to be used in Theorem \ref{sing-thm} as $T_pS$ 
will satisfy the $k-$CC in Definition \ref{k-cc} at every $p 
\in S$. Now, since the non trivial (mod symmetries)  
components of the Riemann tensor 
are (see, e.g., \cite{senov-warped}, 
equations (3.3)-(3.6))
\begin{equation} \begin{aligned}
R_{\a\b\g\d} &= \tilde{R}_{\a\b\g\d}, \\
R_{\a B \g D} &= -r \left( \tilde{\nabla}_\g \tilde \nabla_\a r \right) 
\bar{g}_{BD}, \\
R_{ABCD} &= r^2 \bar{R}_{ABCD} - r^2 |\tilde \nabla r|^2 \left( \bar{g}_{AC}\bar{g}_{BD} - \bar{g}_{AD}\bar{g}_{BC} \right),
\end{aligned} \end{equation}
we find that, 
for $N^a=(N^\a, N^A)$ and 
$e^A_i, i=1,2,..,k$ 
an orthonormal basis of $W$, condition \eqref{kcc}  
reduces to 
\begin{equation}
\bar{R}_{ABCD}\bar{N}^A\bar{N}^Ce^B_ie^D_j\delta^{ij}-k\left(\bar{N}^A\bar{N}_A \,\tilde \nabla^\b r \tilde 
\nabla_\b 
r + \tilde{N}^\a\tilde{N}^\b \frac{\tilde 
\nabla_\a \tilde \nabla_\b r}{r}\right) \geq 0.
\label{eq:gcc warped}
\end{equation}
Since the fiber metric is positive definite and $N^a$ is null we have  $\bar{N}^A\bar{N}_A=-r^{-2}\,\tilde{N^\a}\tilde{N_\a}\geq 0$. 
The first term in \eqref{eq:gcc warped} is the sum of $k$ 
\textit{sectional curvatures} of the fiber; this  is a 
Riemannian submanifold, so the planes spanned by $\bar{N}$ and 
$e_i$ are non-degenerate. \\
Whenever  $dr \neq 0$,  $r$ can be taken as one 
of the $x^\a$ coordinates, in which case 
\begin{equation}
\begin{aligned}
&\tilde  \nabla^\b r \tilde \nabla_\b r \equiv \tilde{g}^{rr},\\
& \tilde{N}^\a\tilde{N}^\b \tilde \nabla_\a \tilde 
\nabla_\b r \equiv -\tilde{N}^\a\tilde{N}^\b\, \tilde{\Gamma}^r_{\a\b},
\end{aligned}
\end{equation}
and 
\eqref{eq:gcc warped} reads
\begin{equation}
\bar{R}_{ABCD}\bar{N}^A\bar{N}^Ce^B_ie^D_j\delta^{ij} 
- k \left(\tilde g^{rr} \bar N^A \bar N_A -\tilde{N}^\a\tilde{N}^\b\,\frac{ \tilde{\Gamma}^r_{\a\b}}{r} \right)\geq 0.
\label{eq: gcc warped simp0}
\end{equation}

\noindent
The results above can be gathered in the following:

\begin{prop} \label{gctms} For the warped spacetime 
\eqref{eq:warped metric}, if   a $k-$CTM $S$ has $T_pS$ 
tangent to the 
fibers for every $p$, the parallel transport 
of the tangent spaces along future normal null geodesics 
remain tangent to the fibers 
and $T_pS$ satisfies the $k-$cc in Definition \ref{k-cc} 
iff 
\eqref{eq:gcc warped} (equivalently \eqref{eq: gcc warped 
simp0}) holds along  future null geodesics normal to $T_pS$.
\end{prop}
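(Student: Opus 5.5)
The claim has two parts: preservation of verticality under parallel transport, and the reduction of \eqref{kcc} to \eqref{eq:gcc warped}. I will prove them in that order.

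\emph{Verticality is preserved.} Let $\gamma$ be a future null geodesic normal to $T_pS$ with tangent $N^a=(N^\a,N^A)$, and let $E^a$ be the parallel transport of a vertical vector $e^a\in T_pS$. Orthogonality of $e$ to $N$ reads $r^2\bar g_{AB}e^AN^B=0$ at $u=0$. Write $E=(E^\a,E^A)$ and use the Christoffel symbols \eqref{cristo}. The horizontal part of the transport equation is
\[
\dot E^\a+\tilde\Gamma^\a_{\b\g}N^\b E^\g - r\,\tilde\nabla^\a r\,\bar g_{BC}N^BE^C=0 .
\]
The vertical part is
\[
\dot E^A+\tfrac{\p_\b r}{r}\left(N^\b E^A+N^AE^\b\right)+\bar\Gamma^A_{BC}N^BE^C=0 .
\]
The quantity $E\cdot N$ is constant along $\gamma$ because both vectors are parallel. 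So as long as $E^\a=0$ we have $\bar g_{BC}N^BE^C=r^{-2}E\cdot N=0$, and the horizontal equation becomes a linear homogeneous ODE in $E^\a$. I will make this rigorous by uniqueness. Solve the reduced vertical system with $E^\a\equiv0$, namely
\[
\dot E^A+\tfrac{N^\b\p_\b r}{r}E^A+\bar\Gamma^A_{BC}N^BE^C=0 .
\]
Then check that $(0,E^A)$ satisfies the full system. This needs $\bar g_{BC}N^BE^C=0$ along $\gamma$, which I get by differentiating $r^2\bar g_{BC}N^BE^C$ using the geodesic equation for $N^A$ together with the reduced equation. This computation is the main technical step; I expect it to reduce to the constancy of $g(N,E)$. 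By uniqueness of solutions, $(0,E^A)$ is the parallel transport. Applying this to a basis $e_i$ of $T_pS$ shows the transported $h_W$ is built from vertical vectors $E_i$. These stay orthonormal and orthogonal to $N$ along $\gamma$.

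\emph{Reduction of the curvature condition.} At each point of $\gamma$, expand $R_{abcd}N^aN^cE_i^bE_j^d\delta^{ij}$ with $N=(\tilde N,\bar N)$ and $E_i$ vertical. Only the components $R_{\a B\g D}$ and $R_{ABCD}$ contribute, since $R_{\a BCD}$ and similar mixed components vanish. The $R_{\a B\g D}$ term gives
\[
-r\,\tilde N^\a\tilde N^\g\,\tilde\nabla_\g\tilde\nabla_\a r\ \bar g_{BD}E_i^BE_j^D\delta^{ij}.
\]
The $R_{ABCD}$ term gives the fiber curvature piece, minus $r^2|\tilde\nabla r|^2$ times
\[
\bar g_{AC}\bar N^A\bar N^C\,\bar g_{BD}E_i^BE_j^D\delta^{ij}-\bar g_{AD}\bar g_{BC}\bar N^A E_j^D\bar N^C E_i^B\delta^{ij}.
\]
The second product vanishes by the orthogonality $\bar g(\bar N,E_i)=0$ shown above. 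Orthonormality in $g$ gives $r^2\bar g_{BD}E_i^BE_j^D=\delta_{ij}$, so each trace yields a factor $k/r^2$.

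I then rescale to a $\bar g$-orthonormal frame $e_i=rE_i$, which is how the paper writes \eqref{eq:gcc warped}, and collect the terms. Multiplying the whole inequality by the positive factor $r^{-2}$, or a similar constant, does not change its sign. This gives \eqref{eq:gcc warped}, and \eqref{eq: gcc warped simp0} follows from the coordinate identities stated before the proposition. Since Definition \ref{k-cc} requires \eqref{kcc} along every null geodesic normal to $T_pS$, the equivalence holds pointwise along each such geodesic, which proves the iff.
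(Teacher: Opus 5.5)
Your proposal is correct and follows the paper's route. Verticality of the parallel-transported tangent frame comes from the Christoffel symbols \eqref{cristo}; you make this rigorous by ODE uniqueness, which the paper only asserts. Condition \eqref{kcc} then reduces to \eqref{eq:gcc warped} using the listed Riemann components together with $\bar g(\bar N,E_i)=0$.

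Two small corrections. For the trial field $(0,E^A)$, the quantity $\phi=r^2\bar g_{BC}N^BE^C$ is not constant. It obeys $\dot\phi=-(\dot r/r)\phi$, equivalently $r\phi$ is constant, and this still forces $\phi\equiv 0$ from $\phi(0)=0$. Also, with $e_i=rE_i$ the trace equals the left side of \eqref{eq:gcc warped} exactly, so no extra rescaling is needed.
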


In the remaining of this Section we will restrict to CTMs  
satisfying the hypothesis of Proposition \ref{gctms} (which will be called \textit{properly oriented} from now on).

\subsection{Warped spacetimes with constant curvature fibers}

For warped spacetimes, further simplifications arise  in 
those cases where the fiber has constant curvature $\bar{C}$,
\begin{equation}\label{ccw}
\bar{R}_{ABCD}=\bar{C}(\bar{g}_{AC}\bar{g}_{DB}-\bar{g}_{AD}\bar{g}_{CB}),
\end{equation}
 as the contraction in \eqref{eq:gcc warped} 
 reduces to 
\begin{equation}\label{simp}
\bar{R}_{ABCD}\bar{N}^A\bar{N}^Ce^B_ie^D_j\delta^{ij}=\bar{C}k\,\bar{N}^A\bar{N}_A. 
\end{equation}
and  \eqref{eq: gcc warped simp0} simplifies to
\begin{equation} \label{eq: gcc warped simp1}
\begin{split}
0 &\leq \left(\bar{C}-\tilde{g}^{rr}\right) \,\bar{N}^A\bar{N}_A  + \tilde{N}^\a\tilde{N}^\b\,\frac{ \tilde{\Gamma}^r_{\a\b}}{r} \\
&= \tilde N^\a \tilde N^\b \left[ \left(
\tilde{g}^{rr}-\bar{C}\right) \,\frac{\tilde g_{\a \b}}{r^2} +
\frac{ \tilde{\Gamma}^r_{\a\b}}{r} \right]
\end{split}
\end{equation}
which, interestingly, \textit{does not} depend \emph{explicitly} on $k$, although the dimension of the 
trapped submanifold matters. For example, if $k=d-2$ the 
first term in the first line of \eqref{eq: gcc warped simp0}
is absent since $\tilde 
N$ is a null vector of the Lorentzian base manifold 
(for higher codimension this vector is timelike in general). 
In view of Proposition \ref{gctms}, if a CTM is tangent 
to the fibers at every point, it will satisfy the hypothesis of Theorem \ref{sing-thm} iff \eqref{eq: gcc warped simp1} holds 
along its future null orthogonal geodesics. 
Some examples follow.

\subsection{Spherical symmetry in arbitrary dimensions}

In $d$ dimensions, in those regions where the area radius $r$ can be used as 
a coordinate ($dr \neq 0$), the most general spherically 
symmetric metric can be written in the form
\begin{equation}
    \text{d}s^2= -e^{2\beta}f\text{d}v^2+2e^{\beta}\text{d}v\text{d}r+r^2\gamma_{AB}\text{d}\theta^A\text{d}\theta^B,
\label{eq: vaidya metric}
\end{equation}
where  $\beta\equiv \beta(v,r)$, $f\equiv f(v,r)=1-
\frac{\mu(v,r)}{r^{d-3}}$, with  $\mu(v,r)$ the 
\textit{Misner-Sharp mass function} and 
$\gamma_{AB}\text{d}\theta^A\text{d}\theta^B$ is the 
metric 
of the unit $(d-2)$ sphere ${\rm S}^{d-2}$, 
which, in 
hyper-spherical coordinates $\theta^A$, reads
\begin{equation} \label{esferas}
\gamma_{AB}\; \text{d}\theta^A\text{d}\theta^B =
\sum_{B=1}^{d-2}  C_B\, \left(\text{d} \theta^B\right)^2,
\end{equation}
with
\begin{equation} \label{CB}
C_B = \begin{cases} 1 &, B=1 \\ 
\prod\limits_{C=1}^{B-1} \sin^2 (\theta^C) &, B>1 \end{cases}
\end{equation}
 The metric \eqref{eq: vaidya metric} can 
be used to model 
spherical collapse as well as static spherical 
symmetric black holes. 
If the 
spacetime is to satisfy the NEC and, in the dynamical case, 
we assume that the mass-energy is ingoing, then it is straightforward  to see that the following conditions must hold
\begin{subequations}
\begin{align}
&\partial_r\beta \geq 0, \label{condiciones sc 1}\\ 
&\partial_r\mu \geq r^{d-3}f  \partial_r\beta, \label{condiciones sc 2} \\
&\partial_v\mu\geq 0. \label{condiciones sc 3}
\end{align}   
\end{subequations}
Equation  \eqref{eq: vaidya metric} gives 
\begin{equation}
\begin{aligned}
\tilde{N}^a\tilde{N}^b\, \tilde{\Gamma}^r_{ab}&=-\left(f\partial_r\beta + \frac{\partial_rf}{2}\right)\tilde{N}^a\tilde{N}_a + \left(\tilde{N}^r\right)^2 \partial_r\beta - \left(\tilde{N}^v\right)^2 \partial_vf \\
&=\left(r^2f\partial_r\beta -\frac{\partial_r\mu}{r^{d-5}}+(d-3)\frac{\mu}{r^{d-4}} \right)\bar{N}^A\bar{N}_A + \left(\tilde{N}^r\right)^2 \partial_r\beta + \left(\tilde{N}^v\right)^2 \frac{\partial_v\mu}{r},
\end{aligned}   
\end{equation}
so the curvature condition \eqref{eq: gcc warped simp1} is equivalent to 
\begin{equation}
\left[ (d-1)\mu - r\,\partial_r \mu + r^{d-2}f\partial_r\beta \right]\bar{N}^A\bar{N}_A + \left(\tilde{N}^r\right)^2 r^{d-4}\partial_r\beta + \left(\tilde{N}^v\right)^2 r^{d-5}\partial_v\mu \geq 0.
\label{eq: gcc sc}
\end{equation}
In virtue of \eqref{condiciones sc 1} and \eqref{condiciones sc 3}, the second and third terms in \eqref{eq: gcc sc} are always non-negative; any negative contribution must come from the first term. Indeed, because of \eqref{condiciones sc 2}, it is easy to see that $- r\,\partial_r \mu + r^{d-2}f\partial_r\beta\leq 0$. \\


As a case of interest we search for CTMs within  the succession of submanifolds given by the $k-$spheres 
${\rm S}_o^{k}$, 
$k=d-2-n$, which are  defined by fixing $v=v_o$, $r=r_o$ and  $\theta^A=\theta^A_o$ for $A=1, 2,...,n$.   Introduce then the notation 
$\theta_o=(\theta^1_o, \theta^2_o,...,\theta^n_o)$. 
Note from \eqref{eq: vaidya metric}-\eqref{CB}
that the 
induced metric on these spheres is 
\begin{equation} \label{esferas2}
\text{d} s_{(k,o)}^2 =
r_o^2 \sum_{B=n+1}^{d-2}  
C_B\;  \left(\text{d} \theta^B \right)^2 = r_o^2 C_{n+1}(\theta_o) \, 
\text{d}s_{(k)}^2
\end{equation}
where $\text{d}s_{(k)}^2=  
\gamma^{(k)}_{I' J'} \text{d}\theta^{I'} \text{d} \theta^{J'}$ with
$I',J'=n+1,...,d-2$ is the metric on the  
$k=d-2-n$ dimensional unit sphere ${\rm S}^k$. 
The  $k-$volume of ${\rm S}^k_o$ is 
\begin{equation}\label{ksv}
\text{Vol}({\rm S}^k_{o}(\theta_o)) = r_o^k \, (C_{n+1}(\theta_o))^{k/2}  \;\text{Vol}({\rm S}^k).
\end{equation}
Since the coordinates 
\eqref{eq: vaidya metric}-\eqref{esferas} are 
adapted to ${\rm S}^k_o$ and $\p_v, \p_r, 
\p_{\theta^1},..., \p_{\theta^n}$ and orthogonal to it,
the mean curvature vector field can be easily calculated as the trace over $T {\rm S}^k_o$ of 
(minus) 
 the relevant Christoffel symbols: 
\begin{equation}
-\Gamma_{IJ}^{\phantom{I}a}\big|_{{\rm S}^k_o}=\frac{1}{2}g^{a b}\partial_{b}g_{IJ}=  \begin{cases}r_0\, C_{n+1}(\theta_o)\,g^{a r}(v_o,r_o)\tilde{\g}_{I J}(\theta) & a=v,r \\ \frac{C_{n+1}(\theta_o)}{C_{A}(\theta_o)}\, \text{cot}(\theta^A_o) \tilde{\g}_{IJ}(\theta) & a=A=1,2,...,n
\end{cases},
\label{eq: chr de ST}
\end{equation}
where the middle of the alphabet capital indices $I,J$ 
correspond to the submanifold angles, that is, 
they run from $n+1$ to $d-2$, and the second line 
is absent if $n=0$.
Taking the $T{\rm S}_o^k$-trace gives the mean curvature vector 
field 
\begin{equation}
H=\frac{k}{r_0}\left(e^{-\b}\partial_v + f\partial_r + {{\hat \sum}{}}_{A=1}^{n} \ \frac{1}{r_0C_{A}(\theta_o)}\text{cot}(\theta^A_o)\partial_{\theta^A} \right),
\end{equation}
where the hat indicates the sum is absent if $n=0$.
Requiring that $H^a$ be timelike 
gives the trapping condition for $S_o^k$:
\begin{equation}
H^a H_a = \frac{k^2}{r_o^2} \left[f + \hat\sum_{A=1}^n \frac{1}{C_{A}}\text{cot}^2(\theta_{0}^A) \right]
< 0.
\label{eq: trapping of S^k}
\end{equation}
The term in square brackets simplifies  to
\begin{equation} \label{trapped}
f + \hat \sum_{A=1}^n 
\frac{1}{C_{A}}\text{cot}^2(\theta_{0}^A) = 
\begin{cases} 
f &, \text{if} \; n=0 \\
f -1 + \frac{1}{C_{n+1}(\theta_o)} &, \text{if} \, n \geq 1
\end{cases}
\end{equation}
The case $n=0$ above recalls us that the $d-2$ spheres are trapped 
if they are inside the apparent horizon \cite{senovsc}, the $n \geq 1$ 
cases set a minimum $k-$volume for ${\rm S}^k_o$
(see \eqref{ksv}, note that $\mu$ has dimension 
$r^{d-3}$):
\begin{equation}\label{cond}
\frac{1}{C_{n+1}(\theta_o)} < \frac{\mu(v_o,r_o)}{r_o^{d-3}},
\end{equation}
or, equivalently 
\begin{equation}\label{cond1}
\left[\frac{\text{Vol}({\rm S}^k_o)}{ r_o^k  \, \text{Vol}({\rm S}^k)} \right]^{2/k}
>  \frac{r_o^{d-3}}{\mu(v_o,r_o)}
\end{equation}
Note that, since $C_{n+1}(\theta_o) \leq 1$, inequality 
\eqref{cond}  implies $f \leq 0$; thus,  our
trapped $k-$spheres can be found only within the 
apparent horizon. Moreover, as the supporting 
trapped $d-2$ sphere approaches the apparent horizon, the only contained lower-dimensional trapped spheres  fit  around  the generalized 
great circles: $\theta^1_o=\theta^2_o=...=\theta^n_o=\pi/2$.\\

We can make a more geometric description of what we have found, 
independent  of any specific set of angular coordinates / axis choice: 
the relation of the hyper-spherical coordinates $\theta^A$ of a $d-2$ dimensional  sphere with 
the Cartesian coordinates of its Euclidean ambient space 
$\mathbb{R}^{d-1}$ is
\begin{equation}
\begin{split}
  &x^1 = r \cos(\theta^1), \\
  &x^2 = r \sin(\theta^1) \cos(\theta^2), \\
  &x^3 = r \sin(\theta^1) \sin(\theta^2) \cos(\theta^3), \\
      &\qquad \vdots\\
  &x^{d-2} = r \sin(\theta^1) \cdots \sin(\theta^{d-3}) \cos(\theta^{d-2}), \\
  &x^{d-1}     = r \sin(\theta^1) \cdots \sin(\theta^{d-3}) \sin(\theta^{d-2}).
\end{split}
\end{equation}
Fixing 
$r=r_o$,  $\theta^1=\theta^1_o$, ... ,$\theta^n=\theta^n_o$ 
 defines an affine space $\mathcal{A} \subset \mathbb{R}^{d-
 2}$:   $x^A=x^A_o$ for $A=1,2,...,n$, leaving a cut of the $d-2$ dimensional sphere of radius $r_o$ 
with $\mathcal{A}$. The resulting $k$-sphere  
has the $SO(k)$ isometry subgroup $H$ 
 consisting of matrices of the form 
$g= \text{diag}( \mathbb{I}_{n \times n}, h)$,
$h \in SO(k)$. 
\emph{Any other} conjugate $SO(k)$ subgroup of $SO(d-2)$, 
say $gHg^{-1}$,  
would do the same: its orbit (now the intersection 
of the $(d-2)$-sphere with $g \mathcal{A}$)
would be a $k-$sphere, 
and this will be 
trapped iff \eqref{cond1} holds. As mentioned before, as the supporting sphere approaches the horizon, the $k-$dimensional trapped spheres 
will tight around  the intersection with 
subspaces of $\mathbb{R}^{d-1}$, as, according to \eqref{cond},
$x_o^A \to 0$ for 
$A=1,2,...,n$.\\

Having found a rich set of trapped submanifolds of dimensions 
$k=1,2,...,d-2$ in the most general spherically symmetric spacetime, 
we would like to know if these are 
singularity predictors. According to Proposition \ref{gctms} 
and using the simplifications found for   spherically symmetry, 
this will be the case if: i) the (proper extension of) the 
metric \eqref{eq: vaidya metric} admits 
a non compact Cauchy surface and ii) equation \eqref{eq: gcc sc} holds 
along future null normal geodesics of the CTM. 
Analytic extensions are handled 
the same way as in $3+1$, as they involve only the base space   and so are independent of the value 
of $d \geq 4$. Condition (ii) needs to be checked case by case. 
Some examples follow.

\subsubsection{Schwarzschild-Tangherlini BH} Inserting $\beta(v,r)=0$ and $\mu(v,r)=\mu>0$ in \eqref{eq: vaidya metric} gives the generalization of the Schwarzschild BH to arbitrary dimensions \cite{reall}. In this case it is trivial to see that \eqref{eq: gcc sc} is fulfilled {\em at every point of the spacetime}, in particular, along the future null normal geodesics of 
our trapped $k-$spheres. As mentioned above, by Proposition \ref{gctms}, every properly oriented CTM (for any $k$) will satisfy the hypotheses of Theorem \ref{sing-thm} and thus act as a singularity predictor.

\subsubsection{Vaidya Spacetime} The Vaidya spacetime of incoming radiation is trivially extended to arbitrary dimensions by replacing $\beta(v,r)=0$ and $\mu(v,r)=\mu(v)>0$ (with $d \mu/dv \geq 0$) in \eqref{eq: vaidya metric}. As there is no dependence in $r$, the only negative contribution in \eqref{eq: gcc sc} vanishes, so the  curvature condition is satisfied. The conclusion is same as in the previous example.

\subsubsection{Reissner-Nördstrom BH}

A more intriguing case is the spherically symmetric charged sub-extreme black hole, whose metric depends on two parameters: the mass $m>0$ and the electric charge $q\neq 0$, subject to $m\geq |q|$. For the sake of simplicity we will fix $d=4$, then the metric is \eqref{eq: vaidya metric} with $\beta(v,r)=0$ and $f(v,r)=1-\frac{2m}{r}+\frac{q^2}{r^2}$. The zeroes of 
$f$ are the Cauchy ($r_-$) and event ($r_+$) horizon radii
\begin{equation}\label{hor}
r_\pm = m\pm \sqrt{m^2-q^2}.
\end{equation}
The maximal analytic extension contains infinitely many copies of regions I ($r>r_+$), II ($r_- < r<r_+$) and III ($0<r<r_-$). 
We will consider instead 
the globally hyperbolic spacetime $M^{3+1}_{RN}$,  
depicted in Figure \ref{fig}, which consists of two copies of region II and two copies of region I (primes are used in the figure to identify isometric regions). Region III is physically 
irrelevant: this static solution of the Einstein-Maxwell equations 
is linearly unstable: perturbations inside it grow exponentially in time \cite{Dotti0}. As it is well known 
there are infinitely many possible (non analytic) 
extensions of $M^{3+1}_{RN}$
beyond the Cauchy (future and past) horizons (see, e.g., \cite{hawk}, page 159).  Moreover, these  
horizons  are unstable and become  a curvature singularity if we replace the $M^{3+1}_{RN}$ data on a Cauchy surface, such as  $\Sigma$ in Figure \ref{fig}, with  data close to it \cite{wald,simpson, Poisson,dafermos}.

\begin{figure}[h]
    \centering
\begin{tikzpicture}
		\node (I)    at ( 2,0)   {I};
		\node (I$'$)   at (-2,0)   {I$'$};
		\node (II)  at (0, 2) {II};
		\node (II$'$)   at (0,-2) {II$'$};
		\node (III)  at (2, 4) [label={[xshift=-20pt,opacity=0.5]center:III}]  {};
		\node (III$'$)   at (-2,4) [label={[xshift=20pt,opacity=0.5]center:III$'$}] {};
		
		\path  
		(I$'$) +(90:2)  coordinate  (I$'$top)
		+(-90:2) coordinate (I$'$bot)
		+(0:2)   coordinate                  (I$'$right)
		+(180:2) coordinate (I$'$left)
		;
		\draw [very thick] (I$'$left) -- (I$'$top);
		\draw(I$'$right) -- (I$'$bot);
		\draw [very thick] (I$'$bot) -- (I$'$left);
		
		\path 
		(I) +(90:2) coordinate
		 (Itop)
		+(-90:2) coordinate
		 (Ibot)
		+(180:2) coordinate (Ileft)
		+(0:2)   coordinate
		(Iright)
		;		
		\draw  (Ileft) -- 
		node[midway, above, sloped]    {$r=r_+$}
		node[midway, below, sloped] {}
		(Itop);
		\draw [very thick] (Itop) --
		node[midway, above, sloped] {$r=\infty$}
		(Iright) -- 
		node[midway, below, sloped] {$r=\infty$}
		(Ibot);
		\draw (Ibot) -- (Ileft);
		
		\path  
		(II) +(90:2)  coordinate  (IItop)
		+(-90:2) coordinate (IIbot)
		+(0:2)   coordinate                  (IIright)
		+(180:2) coordinate (IIleft)
		;
		\draw [very thick] (IIleft) -- node[midway, above, sloped]    {$r=r_-$} (IItop) -- node[midway, above, sloped]    {$r=r_-$} (IIright); 
		\draw (IIbot) -- node[midway, above, sloped]    {$r=r_+$} (IIleft);
		
		\path  
		(II$'$) +(90:2)  coordinate  (II$'$top)
		+(-90:2) coordinate (II$'$bot)
		+(0:2)   coordinate                  (II$'$right)
		+(180:2) coordinate (II$'$left)
		;
		\draw (II$'$left) -- (II$'$top);
		
		\draw [very thick] (II$'$right) -- (II$'$bot) -- (II$'$left);
		
		\path  
		(III$'$) +(90:2)  coordinate  (III$'$top)
		+(-90:2) coordinate (III$'$bot)
		+(0:2)   coordinate                  (III$'$right)
		+(180:2) coordinate (III$'$left)
		;
		\draw [opacity=0.5](III$'$top) -- (III$'$right);
		
		\path  
		(III) +(90:2)  coordinate  (IIItop)
		+(-90:2) coordinate (IIIbot)
		+(0:2)   coordinate                  (IIIright)
		+(180:2) coordinate (IIIleft)
		;
		\draw [opacity=0.5] (IIIleft) -- (IIItop) ;
		
        \draw[decorate,decoration=zigzag,opacity=0.5] (I$'$top) -- (III$'$top)
		node[midway, above, sloped] {$r=0$};		
		\draw[decorate,decoration=zigzag,opacity=0.5] (Itop) -- (IIItop)
		node[midway, below, sloped] {$r=0$};		
		\draw[decorate,decoration=zigzag,opacity=0.5] (I$'$bot) -- (-2,-6);		
		\draw[decorate,decoration=zigzag,opacity=0.5] (Ibot) -- (2,-6);
		
		\draw[opacity=0.5] (II$'$bot) -- (-2,-6);
		
		\draw[opacity=0.5] (II$'$bot) -- (2,-6);
		
		\draw 
		(Iright) to[out=168, in=-20, looseness=1.5] node[pos=0.3, above, sloped] {$\Sigma$} (I$'$left) ;
		
	\end{tikzpicture}
    \caption{The spacetime $M^{3+1}_{RN}$ considered in the example is the globally hyperbolic open  subset  of the maximally analytic extension of the 
    Reissner-Nördstrom Einstein-Maxwell field equations  solution shown in this figure.
    $\Sigma$ is a Cauchy surface.}
    \label{fig}
    \end{figure}

It is well known that Penrose's singularity theorem cannot be 
used in any globally hyperbolic open subset of 
the maximal analytic extension 
to predict the geodesic incompleteness 
caused by the $r=0$ singularity, since it is beyond the Cauchy horizon (see \cite{minguzzi} for 
alternative singularity theorems that fit in this case). 
However the trapped 
spheres in region II of the globally hyperbolic space 
$M^{3+1}_{RN}$, which satisfies the NEC and admits a non-compact Cauchy surface (e.g., 
$\Sigma$ in Figure \ref{fig}), predict  the future null geodesic incompleteness 
caused by the Cauchy horizon at $r=r_-$. A slight variation of the initial data at $\Sigma$ would preserve theses trapped surfaces, which now would predict the 
singular hypersurfaces  where the perturbed spacetime ends. \\
A natural question to ask is if we could use  higher  codimension CTMs (that is TLs) of $M^{3+1}_{RN}$ to predict its incompleteness. Let us switch the notation for 
spherical coordinates from $(\theta^1,\theta^2)$ used above to the standard names $(\theta,\phi)$. Condition \eqref{cond} 
for a parallel $\theta=\theta_o$ to be a TL can be 
written, using \eqref{hor}, as 
\begin{equation}
\frac{(r_o-r_+)(r_o-r_-)}{{r_o}^2} + \cot^2 \theta_o <0.
\end{equation}
This equation implies that trapped parallels occur in region II (and its isometric copy 
II'), in a band of the sphere they belong to that is symmetric around the Equator 
$\theta=\pi/2$, and that this band 
 narrows to zero width as $r \to r_+$ from the left 
or $r \to r_-$ from the right. Can we use these TLs in Theorem \ref{sing-thm} to predict the geodesic incompleteness 
caused by the Cauchy horizon?
  Condition \eqref{eq: gcc sc} in this case results in
\begin{equation} \label{TLcond}
    \frac{(3mr-2q^2)}{r}\bar{N}^A\bar{N}_A \geq 0,
\end{equation}
\noindent which is fulfilled when $r\geq \frac{2q^2}{3m}$. 
A TL then satisfies the hypothesis of Theorem \ref{sing-thm} 
if \eqref{TLcond} holds all the way from the TL 
value $r=r_o$ (which lies in region II),  to the Cauchy horizon at $r=r_-$. This will be the case 
iff $3mr_--2q^2>0$ or, equivalently, if the condition 
\begin{equation}
\frac{|q|}{m} > \frac{\sqrt{3}}{2}\sim 0.87
\end{equation}
is added to the sub-extreme requirement that  $|q|<m$. Note 
that this is different from the case of CTSs, which 
predict geodesic incompleteness for arbitrary 
$|q| <m$.

\section{Conclusions}\label{conclusion}

Penrose's singularity theorem proves 
that the existence of
a CTS anticipates the incompleteness 
of a future null geodesic orthogonal to it. 
The theorem  requires that $R_{ab}N^a N^b \geq 0$ along every future null 
geodesics normal to the CTS ($N^a$ is   
the geodesic tangent vector), 
a condition that is guaranteed if the  
stronger NEC is satisfied. The 
generalization of Penroses's theorem 
to CTMs of higher codimension, given 
in \cite{Gallo}, 
requires that  the curvature 
condition \eqref{cc} holds along future null orthogonal geodesics. This  
 is  guaranteed if the $k-$GCC introduced in 
Definition \ref{Dgcc} (the analogous of the NEC), is satisfied. 
In this paper we prove a number of useful properties 
of the $k-$GCC, these are gathered in Proposition \ref{Pgcc}. 
We also  show  that 
the $k-$GCC is violated in generic 
spacetimes, in which case  not every CTM predicts geodesic incompleteness. However,  particular 
CTMs can be found  for which the weaker condition \eqref{cc} holds and  Theorem \ref{sing-thm} applies. Since these 
CTMs predict singularities at their future, in view of the weak cosmic censorship, they should  be regarded as 
signatures of black hole interiors.  A number 
of examples are worked out, including   CTMs 
of dimensions ranging from one to $d-1$ in $d+1$ dimensional 
spacetimes.  
The fact that the $k-$GCC does not hold in generic spacetimes, does not allow the use of Proposition \ref{focal} to prove, using standard arguments, that $k-$CTMs cannot intersect the DOC.

\end{document}